\DeclareMathAlphabet{\bit}{OML}{cmm}{b}{it}
\newtheorem{lem}{Lemma}
\newtheorem{thm}{Theorem}
\def\<{\leqslant}           
\def\>{\geqslant}           
\def\d{\partial}
\def\wh{\widehat}
\def\wt{\widetilde}
\def\Re{\mathrm{Re}}   
\def\mR{\mathbb{R}}    
\def\mC{\mathbb{C}}    
\def\Tr{\mathrm{Tr}}       
\def\rT{\mathrm{T}}        
\def\bE{\mathbf{E}}    
\def\[[[{[\![\![}
\def\]]]{]\!]\!]}
\def\bra{{\langle}}
\def\ket{{\rangle}}
\def\re{\mathrm{e}}        
\def\rd{\mathrm{d}}        
\def\x{\times}
\def\ox{\otimes}
\def\cF{\mathcal{F}}
\def\cP{{\mathcal P}}
\def\cov{\mathbf{cov}}
\def\cN{\mathcal{N}}
\def\mS{\mathbb{S}}
\def\eps{\epsilon}
\def\Ups{\Upsilon}
\title{\LARGE \bf Dissipative Linear Stochastic Hamiltonian Systems%
 $^*$}
\author{Igor G. Vladimirov$^{\dagger}$, \quad Ian R. Petersen$^{\dagger}$%
\thanks{$^*$This work is supported by the Australian Research Council under grant DP160101121.}
\thanks{$^\dagger$Research School of Engineering, College of Engineering and Computer Science, Australian National University, ACT 2601, Canberra, Australia.
{\tt igor.g.vladimirov@gmail.com}, {\tt i.r.petersen@gmail.com}.
}
}
\begin{document}
\maketitle
\thispagestyle{empty}
\begin{abstract}
This paper is concerned with stochastic Hamiltonian systems which model a class of open dynamical systems subject to random external forces. Their dynamics are governed by Ito stochastic differential equations whose structure is specified by
a Hamiltonian, viscous damping parameters  and system-environment coupling functions. We  consider energy balance relations for such systems with an emphasis on linear stochastic Hamiltonian (LSH)  systems with quadratic Hamiltonians and linear coupling. For LSH systems, we also discuss stability conditions, the structure of the invariant measure and its relation with stochastic versions of the virial theorem. Using Lyapunov functions, organised as deformed Hamiltonians, dissipation relations are also considered for LSH systems driven by statistically uncertain external forces. An application of these results to feedback connections of LSH systems is outlined.
%
\end{abstract}

\begin{keywords}
Stochastic Hamiltonian system,
energy balance relations,
virial theorem,
statistically uncertain noise,
stochastic robust stability.

\emph{MSC Codes} ---
37K05,      
60H10,   	
93C05,   	
93C80,   	
60G15,   	
60G35.   	
\end{keywords}


\section{\bf Introduction}

The internal dynamics of physical systems and their interaction with the surroundings are  strongly influenced by conservation laws. This includes the energy balance relations which
manifest themselves in the structure of governing equations for such systems.
Being woven into Lagrangian and Hamiltonian mechanics, this specific structure is taken into account and plays an important part in analysis and control design for
port-Hamiltonian systems \cite{OVMM_2001,OVME_2002,VJ_2014}. Energy transfer and dissipation  (for example, in the form of viscous damping, mechanical friction or electrical resistance) are crucial mechanisms which underlie the collective behavior in connections of such systems and can be used for achieving their stability and other performance specifications \cite{V_2016,W_1972}. The control-by-interconnection paradigm is alternative to the measurement-actuation control approach and applies not only to classical systems but also extends to coherent (that is, measurement-free) quantum control settings \cite{JG_2010,P_2014}.
Despite the novelty of its modern applications, this approach was used in centrifugal governors \cite{M_1868} long before the age of controllers with analog or digital processing of electrical signals.

The energy flows between interacting systems of interest (for example, the plant and controller) obey
balance equations which are
usually formulated in terms of the internal energy of the system, specified by its Hamiltonian, the energy dissipation and  the system-environment coupling. These concepts have a general representation in the form of storage and  supply rate functions in the theory of dissipative systems \cite{W_1972}, where the underlying processes (including the external forces) are usually represented by deterministic functions of time (satisfying local square integrability conditions).
%
%
For finite-dimensional linear time-invariant systems,  the properties of being passive, positive real or negative imaginary (in the case of position variables instead of the velocity as the output) admit criteria in the form of linear matrix inequalities for the transfer functions in the frequency domain or the state-space matrices themselves \cite{PL_2010,XPL_2010}. These criteria have extensions to infinite networks of such systems \cite{VP_2018}.



The present paper is concerned with a class of stochastic Hamiltonian systems driven by random external forces. Their dynamics are governed by Ito stochastic differential equations (SDEs) whose structure is specified by system energetics in terms of a Hamiltonian, viscous damping parameters  and system-environment coupling functions. We  consider energy balance relations for such systems with an emphasis on linear stochastic Hamiltonian (LSH)  systems with quadratic Hamiltonians and linear coupling. For LSH systems, we also discuss stability conditions, the structure of the invariant Gaussian measure and its relation with stochastic versions of the virial theorem \cite{FBKB_2016}. Using Lyapunov functions, organised as deformed Hamiltonians, robust stability estimates are also obtained for LSH systems driven by statistically uncertain external forces. The latter are modelled as Ito processes whose drift and diffusion parts satisfy sector boundedness conditions which are similar to those in \cite{PUS_2000}. We outline an application of these results to
feedback connections of LSH systems.

The paper is organised as follows.
Section~\ref{sec:sys} specifies the class of stochastic Hamiltonian systems.
Section~\ref{sec:bal} discusses energy balance relations. 
Section~\ref{sec:lin} describes the class of LSH systems and provides stability conditions.
Section~\ref{sec:inv} studies the invariant measure of the LSH system and its connection with the virial  theorem.
Section~\ref{sec:diss} discusses dissipation relations for the LSH system driven by a statistically uncertain random force.
Section~\ref{sec:loop} applies these results to feedback connections of LSH systems.
Section~\ref{sec:conc} provides concluding remarks.

\section{\bf Stochastic Hamiltonian systems}
\label{sec:sys}

 We consider a class of stochastic Hamiltonian systems with $n$ degrees of freedom in the phase space $\mR^{2n} = \mR^n \x \mR^n$, which is the product of the position and momentum spaces. The position of the system is specified by a vector $q:= (q_k)_{1\< k\< n} \in \mR^n$ of generalised coordinates, with its time derivative $\dot{q} = (\dot{q}_k)_{1\< k\< n} \in \mR^n$ being the generalised velocity vector. The kinetic energy of the system is a position-dependent quadratic form
 \begin{equation}
 \label{kin}
    T(q,p):=
    \frac{1}{2}
    \|\dot{q}\|_{M(q)}^2
    =
    \frac{1}{2} \|p\|_{M(q)^{-1}}^2
 \end{equation}
of the velocity or the corresponding momentum vector
\begin{equation}
\label{pqdot}
    p:=
    \frac{1}{2}
    \d_{\dot{q}} (\|\dot{q}\|_{M(q)}^2)
    =
    M(q)\dot{q},
\end{equation}
where $\|v\|_N:= \sqrt{v^{\rT}N v} = |\sqrt{N}v|$ is a weighted Euclidean semi-norm of a real vector $v$ specified by a positive semi-definite matrix $N$. Here,
$M(q)$ is a real positive definite symmetric mass matrix (whose role, in the case of rotational degrees of freedom, can also be played by the tensor of inertia which may change together with the system configuration over the course of its movement).  The total energy of the system is quantified by a Hamiltonian  $H: \mR^{2n} \to \mR$ on the phase space as the sum of the kinetic energy (\ref{kin}) and
the potential energy $V: \mR^n \to \mR$:
\begin{equation}
\label{H}
    H(q,p)
    :=
    T(q,p)
    + V(q),
    \qquad
    q,
    p \in \mR^n.
\end{equation}
Both functions $V$ and $M$ are assumed to be twice continuously differentiable, so that $H$ inherits this property. The position $q$ and the momentum $p$ of the system  at time $t\>0$  evolve  according to the equations
\begin{align}
\label{SH1}
    \dot{q}
    &= \d_pH(q,p) =  M(q)^{-1}p,\\
\label{SH2}
    \rd p
    & =
    -(\d_q H(q,p) + F(q)\dot{q} )\rd t
    +
    G(q)\rd W,
\end{align}
the first of which is an ODE following from (\ref{pqdot}), while the second one is an Ito SDE \cite{KS_1991,LS_2001} driven by an $\mR^m$-valued  Ito process $W:=(W_k)_{1\< k\< m}$. The latter models an external random force acting on the system and will be
specified in the next section. The map $G:\mR^n\to \mR^{n\x m}$ describes the \emph{dispersion matrix} of the SDE (\ref{SH2}), while $F:\mR^n\to \mS_n^+$ specifies the Langevin viscous damping force $-F(q)\dot{q}$ (with $\mS_n^+$ the set of real positive semi-definite symmetric matrices  of order $n$).
%
%
In accordance with its physical meaning, the rate of work (per unit time) of the damping force  over the system is non-positive:
\begin{equation}
\label{damp}
    -\dot{q}^{\rT} F(q)\dot{q}
    =
%
    -\|\dot{q}\|_{F(q)}^2
    \< 0.
\end{equation}
The $\mR^{2n}$-valued state process
\begin{equation}
\label{XQP}
    x
    :=
    {\begin{bmatrix}
        q\\
        p
    \end{bmatrix}}
\end{equation}
satisfies the SDE
\begin{equation}
\label{dX}
    \rd x
    =
    \left(
        J
        -
        {\begin{bmatrix}
          0 & 0 \\
          0 & 1
        \end{bmatrix}}
        \ox F
    \right)
    H'
        \rd  t
        +
        {\begin{bmatrix}
          0 \\
          1
        \end{bmatrix}}
        \ox
        G\rd W,
\end{equation}
which is obtained by combining (\ref{SH1}) with (\ref{SH2}) (with
the arguments omitted for brevity).
Here, $\ox$ is the Kronecker product of matrices,  and
\begin{equation}
\label{J}
    J:=
    {\begin{bmatrix}
        0 & 1 \\
        -1 & 0 \\
    \end{bmatrix}}
    \ox I_n
\end{equation}
is the symplectic structure matrix
which generates the Poisson bracket  \cite{A_1989}
\begin{equation}
\label{Poiss}
    \{\varphi, \psi\}
    :=
    \varphi'^{\rT} J \psi'
    =
    \d_q \varphi^{\rT}\d_p \psi - \d_p \varphi^{\rT}\d_q \psi
\end{equation}
for differentiable real-valued functions 
on the phase space.
Also,
$(\cdot)'$ denotes the gradient of a function with respect to all its variables, so that
\begin{equation}
\label{H'}
    H'
    =
    {\begin{bmatrix}
        \d_q H \\
        \d_p H
    \end{bmatrix}}
\end{equation}
consists of the gradients of the Hamiltonian over the positions and momenta given by
\begin{equation}
\label{dHdq}
    \d_q H
    =
    V'(q)
    -
    \frac{1}{2}
    \big(
        p^{\rT}M^{-1} (\d_{q_k}M) M^{-1}p
    \big)_{1\< k\< n}
\end{equation}
and (\ref{SH1}).  While $-V'$ describes the potential force field, the additional ``centrifugal'' terms of $\d_qH$ in (\ref{dHdq}), which depend on the velocity in a quadratic fashion,  come from the dependence of $M$ on $q$ and the identity $\d_{q_k}(M^{-1}) =-M^{-1} (\d_{q_k}M) M^{-1}$.

\section{\bf Energy balance relations}\label{sec:bal}

Throughout this section, it is assumed that the stochastic differential of the Ito process $W$, which drives the SDE (\ref{SH2}), is given by
\begin{equation}
\label{dW}
  \rd W(t) = \alpha(t) \rd t + \beta(t) \rd \omega(t).
\end{equation}
Here, $\omega$ is a standard Wiener process \cite{KS_1991} in $\mR^m$ with respect to a filtration $\cF:= (\cF_t)_{t\> 0}$. Also, $\alpha$ and $\beta$ are $\cF$-adapted random processes with values in $\mR^m$ and $\mR^{m\x m}$,  respectively, satisfying
\begin{equation}
\label{good}
    \int_0^t
    (|\alpha(\tau)| + \|\beta(\tau)\|^2)
    \rd \tau
    <
    +\infty
\end{equation}
almost surely
for any time $t>0$, where $\|K\|:= \sqrt{\bra K,K \ket}$ is the Frobenius norm associated with the inner product $\bra K,L\ket:= \Tr(K^\rT L)$ of real matrices 
 (the validity of (\ref{good}) does not depend on a particular choice of the matrix norm).
 Due to the properties of the standard Wiener process $\omega$, the fulfillment of (\ref{good}) leads to a finite quadratic variation \cite{KS_1991}
\begin{equation}
\label{Wquad}
    [W]_t
    =
    \int_0^t
    \Tr \Sigma(\tau)
    \rd \tau
\end{equation}
for the process $W$ in (\ref{dW}) over the time interval $[0,t]$ represented
in terms of
the diffusion matrix 
\begin{equation}
\label{diffmat}
  \Sigma(t) := \beta(t)\beta(t)^\rT.
\end{equation}
A combination of the Ito lemma \cite{KS_1991} with (\ref{SH1}), (\ref{SH2}), (\ref{H'}), (\ref{dW}) and (\ref{diffmat}) leads to the stochastic differential of the Hamiltonian  in (\ref{H}):
\begin{align}
\nonumber
    \rd H
    &=
    H'^{\rT}\rd x
    +
    \frac{1}{2}
    \bra
        \d_p^2 H,
        G\Sigma G^\rT
    \ket
    \rd t\\
\nonumber
    &=
    \Big(
        \{H,H\}
        -
        \d_p H^\rT F\dot{q}
        +
                \frac{1}{2}
        \bra
            M^{-1},
            G\Sigma G^\rT
        \ket
    \Big)\rd t
        +
        \d_p H^{\rT}
        G \rd W\\
\label{dH}
        &=
        \Big(
            -
            \|\dot{q}\|_F^2
    +
            \frac{1}{2}
            \bra
                G^\rT M^{-1}G,
                \Sigma
            \ket
        \Big)\rd t
        +
        \dot{q}^\rT
        G \rd W,
\end{align}
where
$
    \d_p^2H = M^{-1}
$
is the Hessian of $H$ with respect to $p$ in view of (\ref{kin}) and (\ref{H}). Here, use is also made of the property  $\{H,H\} =0$ for the Poisson bracket (\ref{Poiss}), whereby the potential energy $V$ does not enter the right-hand side of (\ref{dH}).   The term $-\|\dot{q}\|_F^2$ in the drift of (\ref{dH}) is the dissipation rate (\ref{damp}), and the diffusion part $\dot{q}^\rT G \rd W$ is the incremental work of the external force on the system. The stochastic nature of the setting under consideration manifests itself in the term $\frac{1}{2} \bra G^\rT M^{-1}G, \Sigma \ket\>0$, which  comes from the diffusion part of the Ito process $W$ in (\ref{dW}). This additional term reflects the specific features  of the Ito stochastic calculus and is absent in the Stratonovich formulation   of SDEs \cite{RW_2000}.
In the case $F=0$ and $G=0$ (when there is no damping and the system is not affected by the environment), the relation (\ref{dH}) reduces to the ODE $\dot{H} = 0$, which describes the conservation of energy in closed Hamiltonian systems. The diffusion term $G\rd W$ of the SDE  (\ref{SH2}) is organised as a linear combination of potential force fields (with random ``coefficients'' $\rd W_1, \ldots, \rd W_m$)  if the dispersion matrix $G$ is given by
\begin{equation}
\label{gh}
  G(q) = L'(q)^{\rT},
  \qquad
  q\in \mR^n.
\end{equation}
Here, $L':= (\d_{q_k}L_j)_{1\< j\< m,1\< k\< n}\in \mR^{m\x n}$ is the Jacobian matrix for a map $L:=(L_j)_{1\< j\< m}: \mR^n\to \mR^m$
formed
from system-environment coupling functions $L_1, \ldots, L_m$ which are assumed to be continuously differentiable (see also \cite{V_2016} and references therein).  In comparison with $V$, each of the functions $L_k$ plays the role of the negative of potential energy, which generates the corresponding column $L_k'= (\d_{q_j} L_k)_{1\< j\< n}$ of the matrix $G$ in (\ref{gh}). Therefore, the incremental work  of the external force on the system in (\ref{dH}) can be represented as
\begin{equation}
\label{pot}
  \dot{q}^\rT
        G \rd W
        =
          (L'\dot{q})^\rT
         \rd W
        =
        \dot{y}^{\rT}\rd W,
\end{equation}
where use is made of the time derivative of the composite function
\begin{equation}
\label{y}
  y := L(q),
\end{equation}
which will be interpreted as an $\mR^m$-valued output of the stochastic Hamiltonian system (\ref{dX}). As an input-output operator, the resulting system is specified by the quadruple $(V,M,F,L)$ of the potential energy function $V$, the mass and damping matrices $M$ and $F$, and the vector $L$ of coupling functions. The structure of the output $y$ in (\ref{y}) as a function of the position  variables (rather than the velocity $\dot{q}$ in (\ref{pot})) suggests an analogy with negative-imaginary linear systems \cite{LP_2008,PL_2010,XPL_2010}, whose relation with positive real systems involves a ``rotation'' of the transfer functions 
by $\frac{\pi}{2}$.

\section{\bf Linear stochastic Hamiltonian systems}\label{sec:lin}

Consider the case when the mass matrix $M\succ 0 $ in (\ref{H}) is constant and the potential energy $V$ is a quadratic function of the position vector $q$:
\begin{equation}
\label{VK}
    V(q)
    =
    \frac{1}{2}
    q^\rT K q,
\end{equation}
where $K$ is a real symmetric \emph{stiffness matrix} of order $n$. Then the Hamiltonian $H$ is a quadratic form of the state vector $x$ from (\ref{XQP}):
\begin{equation}
\label{HR}
  H
  =
  \frac{1}{2}
  x^\rT R x,
\end{equation}
which is specified by the \emph{energy matrix}
\begin{equation}
\label{R}
  R
  :=
  {\begin{bmatrix}
    K & 0\\
    0 & M^{-1}
  \end{bmatrix}}.
\end{equation}
Also, suppose the system-environment coupling functions $L_1, \ldots, L_m$ are linear:
\begin{equation}
\label{hN}
  L(q) = N q,
\end{equation}
where $N \in \mR^{m\x n}$ is a \emph{coupling matrix}. Furthermore, let the damping matrix $F$ in (\ref{SH2}) be constant.
In view of (\ref{VK})--(\ref{hN}), the corresponding stochastic Hamiltonian system, described by  (\ref{SH1}), (\ref{SH2}) and (\ref{y}), is linear:
\begin{align}
\label{qdotlin}
    \dot{q}
    & =
    M^{-1} p,\\
\label{dplin}
    \rd p
    & =
    -(Kq + FM^{-1}p)\rd t
        +
          N^\rT
        \rd W,\\
\label{ylin}
    y &
    =
      Nq.
\end{align}
In accordance with (\ref{dX}), an equivalent form of (\ref{qdotlin})--(\ref{ylin}) in terms of the state vector $x$ is given by
\begin{align}
\label{dXlin}
    \rd x
    & =
    A x \rd t + B\rd W,\\
\label{ylin2}
    y &
    =
    Cx,
\end{align}
where the state-space matrices $A \in \mR^{2n \x 2n}$, $B \in \mR^{2n \x m}$, $C\in \mR^{m\x 2n}$ are computed as
\begin{align}
\label{A}
    A
    & :=
        \left(
        J
        -
        {\begin{bmatrix}
          0 & 0 \\
          0 & 1
        \end{bmatrix}}
        \ox
        F
    \right)
    R
     =
        {\begin{bmatrix}
          0 & M^{-1} \\
          -K & -FM^{-1}
        \end{bmatrix}},\\
\label{B}
        B
        & :=
        {\begin{bmatrix}
          0 \\
          N^{\rT}
        \end{bmatrix}},\\
\label{C}
        C
        & :=
        {\begin{bmatrix}
          N & 0
        \end{bmatrix}},
\end{align}
with $J$ given by  (\ref{J}),
cf. \cite[Eq. (20)]{V_2016}.
The linear stochastic Hamiltonian (LSH) system, described by (\ref{qdotlin})--(\ref{ylin})  (or (\ref{dXlin})--(\ref{C})),  is specified by the quadruple $(K,M,F,N)$ of the stiffness, mass, damping and coupling matrices, respectively. The fact that the matrices $A$, $B$, $C$ in (\ref{A})--(\ref{C}) have a special structure due to energetics of the LSH system is reminiscent of the nature of physical realizability conditions for linear quantum stochastic systems \cite{JNP_2008,
SP_2012}. The following lemma shows that the mass matrix can be considered the identity matrix.

\begin{lem}
\label{lem:trans}
As an input-output operator (from $W$ to $y$), the LSH system $(K,M,F,N)$ is equivalent to $(\wt{K},I_n,\wt{F},\wt{N})$ with the identity mass matrix and
the following stiffness, damping and coupling matrices:
\begin{align}
\label{Knew}
    \wt{K}
    & :=
    M^{-1/2}K M^{-1/2},\\
\label{Fnew}
    \wt{F}
    & :=
    M^{-1/2}F M^{-1/2},\\
\label{Nnew}
    \wt{N}
    & :=
    N M^{-1/2},
\end{align}
where $M^{-1/2}:= \sqrt{M^{-1}}$.
\hfill$\square$
\end{lem}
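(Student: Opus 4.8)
The plan is to produce an explicit invertible linear change of state coordinates that carries the quadruple $(K,M,F,N)$ into $(\wt K, I_n, \wt F, \wt N)$ while leaving both the driving process $W$ and the output $y$ unchanged; the two systems then realise the same map $W \mapsto y$.

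First I would pass to the mass-weighted coordinates $\wt q := M^{1/2} q$ and $\wt p := M^{-1/2} p$, i.e.\ the state transformation $\wt x = Sx$ with $S := \blockdiag(M^{1/2}, M^{-1/2})$, which is well defined and invertible since $M\succ 0$ (here $M^{1/2} := \sqrt M$ is symmetric positive definite). A direct computation shows that $S$ preserves the symplectic structure matrix, $S^\rT J S = J$, so the transformed system is again of Hamiltonian type, and its energy matrix becomes $S^{-\rT} R S^{-1} = \blockdiag(\wt K, I_n)$ with $\wt K$ as in (\ref{Knew}); this already identifies the new mass matrix as $I_n$ and the new stiffness matrix as $\wt K$.

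Next I would substitute $q = M^{-1/2}\wt q$, $p = M^{1/2}\wt p$ into (\ref{qdotlin})--(\ref{ylin}) (equivalently, conjugate the state-space data $(A,B,C)$ of (\ref{A})--(\ref{C}) by $S$, using $A \mapsto SAS^{-1}$, $B \mapsto SB$, $C \mapsto CS^{-1}$). The elementary identities $M^{1/2}\dot q = M^{-1/2} p$, $M^{-1/2}FM^{-1}M^{1/2} = M^{-1/2}FM^{-1/2}$ and $M^{-1/2}N^\rT = \wt N^\rT$ then yield $\dot{\wt q} = \wt p$, $\rd \wt p = -(\wt K \wt q + \wt F \wt p)\rd t + \wt N^\rT \rd W$ and $y = \wt N \wt q$, which is exactly the LSH system $(\wt K, I_n, \wt F, \wt N)$ in the form (\ref{qdotlin})--(\ref{ylin}), driven by the same $W$ and producing the same $y$. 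Since $S$ is invertible and acts only on the internal state, the two systems define the same input-output operator.

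Finally I would record that $(\wt K, I_n, \wt F, \wt N)$ is a legitimate LSH quadruple: $\wt K$ is symmetric and $\wt F$ is positive semi-definite, being congruences of $K$ and $F$ under the symmetric matrix $M^{-1/2}$, so the damping inequality (\ref{damp}) is inherited. I do not expect a genuine obstacle here --- the statement is essentially a mass-normalisation --- and the only point needing a little care is checking that $S$ is symplectic, so that the transformed data really has the structured form (\ref{A})--(\ref{C}) with $M$ replaced by $I_n$, rather than being merely state-space equivalent to it.
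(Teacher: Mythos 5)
Your proposal is correct and takes essentially the same route as the paper: the paper's proof uses exactly the change of variables $\wt{q}=\sqrt{M}\,q$, $\wt{p}=M^{-1/2}p$ (equivalently the block-diagonal similarity transformation of $A$) to rewrite (\ref{qdotlin})--(\ref{ylin}) in the form $(\wt{K},I_n,\wt{F},\wt{N})$. Your additional observations (that the transformation is symplectic and that $\wt{K}$, $\wt{F}$ inherit symmetry and positive semi-definiteness by congruence) are correct but not needed beyond what the paper records.
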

\begin{proof}
The equivalence is established by representing (\ref{qdotlin})--(\ref{ylin}) as
\begin{align*}
    \dot{\wt{q}}
    & =
    \wt{p},\\
    \rd \wt{p}
    & =
    -(\wt{K}\wt{q} + \wt{F}\wt{p})\rd t
        +
          \wt{N}^\rT
        \rd W,\\
    y
    &=
      \wt{N}\wt{q}
\end{align*}
in terms of (\ref{Knew})--(\ref{Nnew}) and the appropriately transformed positions and momenta
$
    \wt{q}
    :=
    \sqrt{M} q
    $ and
$
    \wt{p}
    :=
    M^{-1/2} p
$
due to $M\succ 0$.  This corresponds to the similarity transformation
\begin{equation}
\label{Asim}
    {\begin{bmatrix}
          \sqrt{M} & 0\\
          0 & M^{-1/2}
    \end{bmatrix}}
    A
    {\begin{bmatrix}
          M^{-1/2}& 0\\
          0 & \sqrt{M}
    \end{bmatrix}}
    =
    {\begin{bmatrix}
          0 & I_n \\
          -\wt{K} & -\wt{F}
    \end{bmatrix}}
\end{equation}
of the matrix $A$ in (\ref{A}).
\end{proof}

In view of (\ref{A})--(\ref{C}) and Lemma~\ref{lem:trans},
the transfer function of the LSH system $(K,M,F,N)$ can be computed as
\begin{align}
\nonumber
    \Phi(s)
    & :=
    C(sI_{2n}-A)^{-1} B\\
\nonumber
    & =
    {\begin{bmatrix}
          \wt{N} & 0
    \end{bmatrix}
    \begin{bmatrix}
          sI_n & -I_n \\
          \wt{K} & sI_n+\wt{F}
    \end{bmatrix}^{-1}
    \begin{bmatrix}
          0 \\
          \wt{N}^{\rT}
    \end{bmatrix}}\\
\label{T}
        & =
    \wt{N} (s^2I_n + s\wt{F} + \wt{K})^{-1} \wt{N}^\rT,
    \qquad
    s \in \mC,
\end{align}
where use is made of the matrix inversion lemma \cite{HJ_2007}. Evaluation of (\ref{T}) at $s=0$ yields the static gain of the system:
\begin{equation}
\label{T0}
    \Phi(0)
    =
    \wt{N} \wt{K}^{-1} \wt{N}^\rT
    =
    NK^{-1}N^\rT,
\end{equation}
assuming that the stiffness matrix $K$ is nonsingular. Note that
the gain $\Phi(0)$ in (\ref{T0}) is a symmetric matrix.
The following theorem provides sufficient conditions for internal stability of the system.

\begin{thm}
\label{th:stab}
Suppose the LSH system in  (\ref{dXlin})--(\ref{C}) has positive definite stiffness and damping matrices $K$ and $F$.
Then the matrix $A$ in (\ref{A}) is Hurwitz.
\hfill$\square$
\end{thm}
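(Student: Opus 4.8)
My plan is to use a Lyapunov argument based on the Hamiltonian itself, which is the natural storage function here. By Lemma~\ref{lem:trans} I may assume $M = I_n$, so that $A = {\begin{bmatrix} 0 & I_n \\ -K & -F \end{bmatrix}}$ and $R = {\begin{bmatrix} K & 0 \\ 0 & I_n\end{bmatrix}} \succ 0$ since $K \succ 0$. The candidate Lyapunov function is $H(x) = \frac{1}{2}x^\rT R x$, which is positive definite. Computing $A^\rT R + R A$ and using the block structure, the off-diagonal blocks cancel (this is exactly the symplectic/conservative part $\{H,H\}=0$ reflected at the linear level), leaving $A^\rT R + R A = -{\begin{bmatrix} 0 & 0 \\ 0 & 2F \end{bmatrix}}$, which is negative \emph{semi}-definite. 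So $H$ is a weak Lyapunov function and I get stability but not yet asymptotic stability directly.

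To upgrade to a strict inequality (equivalently, to conclude $A$ is Hurwitz), I would invoke a LaSalle-type / PBH observability argument: $A$ is Hurwitz iff the pair $\big(A, {\begin{bmatrix} 0 & 0 \\ 0 & \sqrt{2F}\end{bmatrix}}\big)$ — or equivalently $(A, {\begin{bmatrix} 0 & \sqrt{2F}\end{bmatrix}})$ since $R$ is invertible — is detectable/observable. Concretely, suppose $x = {\begin{bmatrix} q \\ p\end{bmatrix}}$ lies in an $A$-invariant subspace on which the dissipation vanishes, i.e. $p^\rT F p = 0$; since $F \succ 0$ this forces $p = 0$. Invariance under $A$ then gives $\dot q = p = 0$ and $\dot p = -Kq - Fp = -Kq = 0$, and $K \succ 0$ forces $q = 0$. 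Hence the only $A$-invariant subspace contained in $\ker\big({\begin{bmatrix}0 & \sqrt{2F}\end{bmatrix}}\big)$ is $\{0\}$, so by the standard Lyapunov criterion with a positive-semidefinite dissipation term (or directly by the PBH test for eigenvectors on the imaginary axis), $A$ has no eigenvalue on the imaginary axis and every eigenvalue has strictly negative real part.

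Alternatively — and this is the cleanest write-up — I would run the PBH argument directly on eigenvectors: if $Av = \lambda v$ with $v \neq 0$ and $\Re\lambda \geq 0$, then $2\Re(\lambda)\, v^* R v = v^*(A^\rT R + RA) v = -2\, v^* {\begin{bmatrix}0&0\\0&F\end{bmatrix}} v \leq 0$, and since $v^* R v > 0$ this forces $\Re\lambda \leq 0$, hence $\Re\lambda = 0$ and $v^* {\begin{bmatrix}0&0\\0&F\end{bmatrix}} v = 0$; writing $v = {\begin{bmatrix}v_1\\v_2\end{bmatrix}}$, positivity of $F$ gives $v_2 = 0$, then the first block row $v_2 = \lambda v_1$ gives $\lambda v_1 = 0$ and the second block row gives $-Kv_1 = \lambda v_2 = 0$, so $K \succ 0$ forces $v_1 = 0$, contradicting $v \neq 0$. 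The only step requiring any care is the reduction to $M = I_n$ (handled by Lemma~\ref{lem:trans}, noting similarity preserves the spectrum) and the cancellation of the cross terms in $A^\rT R + RA$; the rest is the standard observability-plus-Lyapunov mechanism, so I anticipate no real obstacle.
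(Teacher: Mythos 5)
Your argument is correct but takes a genuinely different route from the paper's. You keep the undeformed Hamiltonian $H=\frac{1}{2}x^\rT Rx$ as a \emph{weak} Lyapunov function, for which (after the reduction to $M=I_n$ via Lemma~\ref{lem:trans}) $A^\rT R+RA=-{\begin{bmatrix}0&0\\0&2F\end{bmatrix}}$ is only negative semi-definite, and you close the gap with a PBH/observability argument on eigenvectors: any eigenvalue with $\Re \lambda\>0$ forces $\Re\lambda=0$ and an eigenvector annihilated by the dissipation block, which $F\succ 0$ and then $K\succ 0$ rule out. Both of your block computations (the cancellation of the cross terms in $A^\rT R+RA$ and the elimination $v_2=0\Rightarrow Kv_1=0\Rightarrow v_1=0$) check out, including the degenerate case $\lambda=0$, which is handled by the second block row rather than the first. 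The paper instead makes the Lyapunov function itself strict by deforming the Hamiltonian, $\Ups=H+\eps q^\rT p$ with $Q$ as in (\ref{Q}), showing $\Psi=-QA-A^\rT Q\succ 0$ for $\eps$ in the explicit range (\ref{eps1})--(\ref{eps2}); it also offers a second, purely algebraic proof via the factorization (\ref{chi1}) of the characteristic polynomial $\det(s^2I_n+s\wt{F}+\wt{K})$. Your route is more elementary and avoids tuning $\eps$, but it delivers only the qualitative Hurwitz property; the paper's construction additionally produces the strict quadratic certificate $(Q,\Psi)$, which is not incidental --- it is precisely the deformed Hamiltonian reused in Theorem~\ref{th:diss} to obtain the quantitative robust stability bound (\ref{diss}). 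So as a standalone proof of the theorem your version is fine, but it would not supply the Lyapunov data needed downstream.
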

\begin{proof}
The fact that the positive definiteness of the matrices $K$, $F$, $M$ ensures the Hurwitz property for $A$ in (\ref{A}) 
can be established by providing
a strict quadratic Lyapunov function specified by the matrix
\begin{equation}
\label{Q}
    Q :=
    R
    +
    \eps
    {\begin{bmatrix}
        0 & 1\\
        1 & 0
    \end{bmatrix}}
    \ox
    I_n
    =
    {\begin{bmatrix}
        K & \eps I_n\\
        \eps I_n & M^{-1}
    \end{bmatrix}},
\end{equation}
where $R$ is the energy matrix from (\ref{R}). Here, $\eps>0$ is a scalar parameter (with $1/\eps$ having the dimension of time) which is small enough in order to guarantee positive definiteness of the matrix $Q$:
\begin{equation}
\label{eps1}
    \eps < \sqrt{\lambda_{\min}(\wt{K})},
\end{equation}
where $\wt{K}$ is given by (\ref{Knew}), and $\lambda_{\min}(\cdot)$ is the smallest eigenvalue (of a matrix with a real spectrum).   The right-hand side of (\ref{eps1}) is the smallest frequency of oscillations in the isolated Hamiltonian system $(K,M,0,0)$ with no damping. The corresponding quadratic form of the state variables for the system $(K,M,F,0)$ (uncoupled from the environment)  is organised as a deformed Hamiltonian $H$ in (\ref{HR}):
\begin{equation}
\label{Ups}
    \Ups
    :=
    \frac{1}{2}
    \|x\|_Q^2
    =
    H + \eps q^\rT p.
\end{equation}
Here, at any point of the phase space $\mR^{2n}$, the additional bilinear term can be represented as
\begin{equation}
\label{inert}
    q^{\rT}p
    =
    q^{\rT} M \dot{q}
    =
    \frac{1}{2}(\|q\|_M^2)^{^\centerdot},
\end{equation}
with $\|q\|_M^2$ resembling the central moment of inertia about the origin (the quantity (\ref{inert}) will play a part in the virial theorem in Section~\ref{sec:inv}).
By a straightforward calculation,
(\ref{A}) and (\ref{Q}) lead to
\begin{equation}
\label{QAAQ}
    \Psi
    :=
    -QA - A^\rT Q
    =
    {\begin{bmatrix}
        2\eps K & \eps FM^{-1}\\
        \eps M^{-1}F & M^{-1} F M^{-1} - 2\eps M^{-1}
    \end{bmatrix}},
\end{equation}
whose right-hand side is
positive definite
if
\begin{equation}
\label{eps2}
    0
    <
    \eps
    <
    \frac{1}{2}
    \lambda_{\min}
    \Big(
    \Big(I_n + \frac{1}{4} \wt{F} \wt{K}^{-1} \wt{F}\Big)^{-1}\wt{F}
    \Big),
\end{equation}
with $\wt{F}$ given by (\ref{Fnew}). Therefore, for any $\eps$ satisfying the constraints (\ref{eps1}) and (\ref{eps2}), the relation (\ref{QAAQ}) implies that $\Ups$ in (\ref{Ups}) is a strict Lyapunov function for the system $(K,M,F,0)$ governed by the ODE $\dot{x} = Ax$, so that $\dot{\Ups} = -\frac{1}{2} \|x\|_{\Psi}^2 < 0$ whenever $x\ne 0$, and hence, $A$ is indeed Hurwitz.
Alternatively, the assertion  of the theorem follows from the invariance of the spectrum of $A$ under the similarity transformation (\ref{Asim}), thus allowing its characteristic polynomial to be computed as
%
\begin{equation}
\label{chi}
    \chi(s)
    :=
    \det(s I_{2n} - A)
     =
    \det
    {\begin{bmatrix}
        s I_n & -I_n\\
        \wt{K} & s I_n + \wt{F}
    \end{bmatrix}}
    =
    \det
    (s^2 I_n + s \wt{F} + \wt{K}),
\end{equation}
which is nonzero for any complex $s := u + iv \in \mC$ with a nonnegative real part $u$ due to positive definiteness of the matrices $\wt{K}$ and $\wt{F}$ in (\ref{Knew}) and (\ref{Fnew}).
Indeed, for any $u\>0$,  the matrix
$
    \wh{F} := \wt{F} + 2u I_n
$
is also positive definite, and hence, for any given $s$ described above, (\ref{chi}) can be represented as
\begin{equation}
\label{chi1}
    \chi(s)
    =
    \det((u^2 - v^2) I_n + \wt{K} + iv \wh{F})
    =
    \det (\wh{K} + iv I_n)
    \det\wh{F},
\end{equation}
where
\begin{equation}
\label{Khat}
    \wh{K}:= \wh{F}^{-1/2}((u^2 - v^2) I_n + \wt{K})\wh{F}^{-1/2}
\end{equation}
is a real symmetric matrix whose spectrum is, therefore, real.
The latter property implies that the eigenvalues of $\wh{K} + iv I_n$  have a common imaginary part $v$ and are all nonzero in the case $v\ne 0$. Furthermore, if $v=0$, then (\ref{Khat}) reduces to
$
    \wh{K} = \wh{F}^{-1/2}(u^2 I_n + \wt{K})\wh{F}^{-1/2}\succ 0
$.
In both cases, the right-hand side of (\ref{chi1}) does not vanish, and hence, the roots of $\chi(s)$ are all in the open left half-plane $\Re s < 0$, whereby $A$ is Hurwitz.
\end{proof}

The strict Lyapunov function $\Ups$ for the isolated damped Hamiltonian system $(K,M,F,0)$ in the proof of Theorem~\ref{th:stab} will be used in dissipation relations of Section~\ref{sec:diss}.

\section{\bf The structure of the invariant measure }\label{sec:inv}

If the linear SDE (\ref{dXlin}) is driven by a standard Wiener process $W$ in $\mR^m$, then its solution $x$ is a Markov diffusion process, which, under the
assumptions of Theorem~\ref{th:stab}, has a unique invariant measure. This measure is  organised as a zero-mean Gaussian probability distribution $\cN(0,\Pi)$ on $\mR^{2n}$ whose covariance matrix
\begin{equation}
\label{Pi}
    \Pi =
    {\begin{bmatrix}
        \Pi_{11} & \Pi_{12}\\
        \Pi_{21} & \Pi_{22}
    \end{bmatrix}}
    =
    \int_0^{+\infty} \re^{tA} BB^{\rT} \re^{tA^\rT} \rd t
\end{equation}
is the controllability Gramian of the pair $(A,B)$, which is a unique solution (due to $A$ being Hurwitz)  of the algebraic Lyapunov equation (ALE)
\begin{equation}
\label{ALE}
    A \Pi + \Pi A^{\rT} + BB^{\rT} = 0.
\end{equation}
In view of (\ref{A}) and (\ref{B}), the blocks $\Pi_{jk} = \Pi_{kj}^\rT\in \mR^{n\x n}$ in (\ref{Pi})
satisfy a set of three algebraic Sylvester equations (ASEs):
\begin{align}
\label{ASE11}
    M^{-1}\Pi_{21} + \Pi_{12}M^{-1}   & = 0,\\
\label{ASE12}
    M^{-1} \Pi_{22} - \Pi_{11}K-\Pi_{12}M^{-1}F & = 0,\\
\label{ASE22}
     -FM^{-1}\Pi_{22}-\Pi_{22}M^{-1}F -K\Pi_{12} - \Pi_{21}K +N^{\rT} N & = 0,
\end{align}
which are the $(1,1)$, $(1,2)$ and $(2,2)$-blocks of the ALE (\ref{ALE}), respectively.
In particular, (\ref{ASE11}) implies that 
\begin{equation}
\label{Pi12Xi}
    \Pi_{12} = \Xi M,
\end{equation}
where $\Xi$ is a real antisymmetric matrix of order $n$.
The covariance structure of the invariant Gaussian measure of the LSH system has a bearing on the following stochastic version of the virial theorem (see, for example, \cite{FBKB_2016}). From (\ref{Pi12Xi}) and the orthogonality of the subspaces of symmetric and antisymmetric matrices, it follows  that, if the system is initialised at the invariant Gaussian distribution $\cN(0,\Pi)$, then
\begin{equation}
\label{Eqp}
    \bE(q^\rT p)
    =
    \Tr \Pi_{12}
    =
    \bra
        \Xi, M
    \ket
    =
    0.
\end{equation}
Alternatively, since the process $q$ has continuously differentiable sample paths, then so also does $\|q\|_M^2$. This makes the identities (\ref{inert})
applicable, and hence,
$
    \bE(q^\rT p)
    =
    \frac{1}{2}
    (\bE (\|q\|_M^2))^{^\centerdot}
    =
    0
$,
in accordance with (\ref{Eqp}). Although the sample paths of $p$ are not differentiable (moreover, have infinite variation and finite quadratic variation due to (\ref{Wquad})), application of the Ito lemma  shows that the smoothness of $q$ makes the Ito correction term disappear in
\begin{align}
\nonumber
    \rd (q^\rT p)
    & =
    (\rd q)^\rT p
    +
    q^\rT \rd p
    +
    (\rd q)^\rT \rd p\\
\nonumber
    & =
    \dot{q}^\rT p\rd t
    +
    q^\rT \rd p\\
\label{dqp}
    &=
    (\dot{q}^\rT p + q^\rT f)\rd t
    +
    y^\rT
    \rd W,
\end{align}
where use is made of (\ref{qdotlin})--(\ref{ylin}) along with the total internal (restoring and damping) force
\begin{equation}
\label{f}
    f:= -Kq - FM^{-1}p.
\end{equation}
  The averaging of both sides of the SDE (\ref{dqp}) over the invariant measure (with the martingale term $    y^\rT
    \rd W
$ making no contribution) leads to 
\begin{equation}
\label{Edqp0}
    \bE(\dot{q}^\rT p + q^\rT f)=0.
\end{equation}
Since
\begin{equation}
\label{2T}
    \dot{q}^\rT p = \|p\|_{M^{-1}}^2 = 2T
\end{equation}
is twice the kinetic energy of the system in (\ref{kin}), then (\ref{Edqp0}) is indeed equivalent to the virial theorem:
\begin{equation}
\label{vir}
    \bE T = -\frac{1}{2}\bE (q^\rT f).
\end{equation}
The relation (\ref{Edqp0}) (or its equivalent form (\ref{vir})) corresponds to taking the trace on both sides of (\ref{ASE12}) and using (\ref{f}).
An additional insight into the invariant measure of the LSH system employs stochastic filtering structures \cite{LS_2001} and comes from
the fact that the right-hand side of the ODE (\ref{qdotlin}) is adapted to the natural filtration $\cP:= (\cP_t)_{t\>0}$ of the momentum process $p$.
Suppose the LSH system is initialised at the invariant distribution, so that the random vector $x(0)$ is $\cN(0,\Pi)$-distributed  and independent of the Wiener process $W$. Then the SDE (\ref{dplin}) can be represented as
\begin{equation}
\label{dpe}
    \rd p
    =
    (\wh{f}-Ke)   \rd t
        +
          N^\rT
        \rd W,
\end{equation}
where
\begin{align}
\label{fhat}
    \wh{f}(t)
    & := \bE(f(t)\mid \cP_t) = -K\wh{q}(t) - FM^{-1}p(t),\\
\label{qhat}
    \wh{q}(t)
    & :=
    \bE(q(t)\mid \cP_t)
    =
    \bE(q(0)\mid \cP_t) + M^{-1}\int_0^t p(\tau)\rd \tau
\end{align}
are the conditional expectations of the current internal force (\ref{f}) and the position with respect to the
$\sigma$-algebra $\cP_t$ (generated by the past history of the momentum process $p$ over the time interval $[0,t]$), 
with
\begin{equation}
\label{qq}
    e:= q-\wh{q}
\end{equation}
the corresponding ``estimation'' error.
Note that (\ref{qhat}) reduces to estimating the initial position $q(0)$ since the
integral part of the solution $q(t) = q(0) + M^{-1} \int_0^t p(\tau)\rd \tau$ of
the ODE (\ref{qdotlin}) is $\cP$-adapted.
The following lemma will allow the calculation of the filtering estimates to avoid the Moore-Penrose pseudoinverse \cite{HJ_2007}.

\begin{lem}
\label{lem:contr}
Suppose the LSH system (\ref{dXlin})--(\ref{C}) satisfies the conditions of Theorem~\ref{th:stab} and its  coupling matrix $N$ is of full column rank:
\begin{equation}
\label{NNpos}
    D:= N^\rT N\succ 0.
\end{equation}
Then the invariant covariance matrix $\Pi$ in (\ref{Pi}) is nonsingular. \hfill $\square$
\end{lem}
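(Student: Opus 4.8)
The plan is to reduce the assertion to controllability of the pair $(A,B)$ from (\ref{A})--(\ref{B}). Since the hypotheses of Theorem~\ref{th:stab} are in force, $A$ is Hurwitz, so the integral in (\ref{Pi}) converges and $\Pi$ is the controllability Gramian of $(A,B)$; moreover, for any $v\in\mR^{2n}$ one has $v^\rT\Pi v = \int_0^{+\infty}\|B^\rT\re^{tA^\rT}v\|^2\rd t$, which vanishes precisely when $B^\rT\re^{tA^\rT}v=0$ for all $t\>0$, i.e. when $v$ is orthogonal to the controllable subspace. Hence $\Pi\succ 0$ (and in particular nonsingular) if and only if $(A,B)$ is controllable, and it remains to prove controllability under the extra hypothesis (\ref{NNpos}).

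For this I would apply the Popov--Belevitch--Hautus criterion: $(A,B)$ is controllable iff there is no nonzero $w\in\mC^{2n}$, partitioned into $\mC^n$-blocks $w_1,w_2$, with $w^*A=\lambda w^*$ for some $\lambda\in\mC$ and $w^*B=0$. With $B=\begin{bmatrix}0\\ N^\rT\end{bmatrix}$, the relation $w^*B=0$ is equivalent to $Nw_2=0$, which forces $w_2=0$ because $N$ has full column rank by (\ref{NNpos}). Substituting $w_2=0$ and using $A$ from (\ref{A}), the lower block of $w^*A=\lambda w^*$ becomes $w_1^\rT M^{-1}=0$, whence $w_1=0$ since $M\succ 0$. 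Thus $w=0$, a contradiction, so $(A,B)$ is controllable and $\Pi$ is nonsingular.

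Alternatively, the same conclusion follows by inspecting only the first two Krylov blocks: by (\ref{NNpos}) the matrix $N^\rT$ has full row rank, so $\operatorname{range}B=\{0\}\x\mR^n$, while $A(\{0\}\x\mR^n)$ has first block $M^{-1}\mR^n=\mR^n$; hence already $\operatorname{range}[B\ \ AB]=\mR^{2n}$. Either route is short — the only nontrivial input is the classical equivalence, for a Hurwitz $A$, between positive definiteness of the Gramian and controllability of $(A,B)$ — and the single point that needs care is that it is full column rank of $N$ (equivalently, full row rank of $N^\rT$, which makes the momentum block of $B$ reach all of $\mR^n$) that is being used, rather than mere nonvanishing of $N$.
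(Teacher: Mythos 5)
Your proposal is correct and follows essentially the same route as the paper: both arguments reduce nonsingularity of $\Pi$ to controllability of $(A,B)$ and then verify controllability from the full column rank of $N$ together with the invertibility of $M$. The paper does the verification by showing $BB^\rT + ABB^\rT A^\rT\succ 0$ via a Schur complement, which is exactly your ``first two Krylov blocks'' alternative; your primary PBH computation is an equivalent, equally valid way of packaging the same observation.
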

\begin{proof}
From (\ref{A}) and (\ref{B}), it follows that the Kalman controllability matrix
$
    \Lambda:= \begin{bmatrix}B&  AB&  \ldots&  A^{2n-1}B\end{bmatrix}
$
satisfies
\begin{equation}
\label{contr}
    \Lambda \Lambda^\rT
    \succcurlyeq
    BB^\rT
    +ABB^\rT A^\rT
    =
    {\begin{bmatrix}
        M^{-1} D M^{-1} & -M^{-1} D M^{-1}F\\
        -FM^{-1} D M^{-1} & D + FM^{-1} D M^{-1}F
    \end{bmatrix}}
    \succ 0
\end{equation}
in view of the positive definiteness of $M$ and $D$ (the Schur complement of the block $M^{-1} D M^{-1}\succ 0$ on the right-hand side of (\ref{contr}) is $D\succ 0$ due to (\ref{NNpos})). Hence, the pair $(A,B)$ is controllable, which is equivalent to $\Pi \succ 0$ in  (\ref{Pi}).
\end{proof}

Since $q$ and $p$ are jointly Gaussian random processes, application of the Kalman stochastic filtering theory \cite{LS_2001} under the conditions of Lemma~\ref{lem:contr} shows that the estimate (\ref{qhat}) satisfies the SDE
\begin{equation}
\label{dqhat}
    \rd \wh{q} = M^{-1}p \rd t - P K D^{-1}(\rd p - \wh{f}\rd t )
\end{equation}
(driven by an innovation process with respect to $\cP$ whose stochastic differential $\rd p - \wh{f}\rd t$ involves (\ref{fhat})),
with the initial condition
\begin{equation}
\label{qhat0}
    \wh{q}(0) = \bE(q(0)\mid \cP_0) = \Pi_{12} \Pi_{22}^{-1} p(0).
\end{equation}
Here,
\begin{equation}
\label{P}
    P(t):= \cov(q(t)\mid \cP_t) = \cov(q(0)\mid \cP_t) = \cov(e(t))
\end{equation}
is the covariance matrix of the Gaussian  conditional distribution of $q(t)$ with respect to $\cP_t$ for any $t\>0$ (which coincides with the unconditional covariance matrix of the position filtering error (\ref{qq}) in the Gaussian case). Due to the $\cP$-adaptedness of the right-hand side of (\ref{qdotlin}) mentioned above,  the Riccati ODE for $P$ in (\ref{P}) reduces to
\begin{equation}
\label{Pdot}
  \dot{P} = - P K D^{-1} K P,
\end{equation}
with
\begin{equation}
\label{P0}
    P(0) = \cov(q(0)\mid \cP_0) = \Pi_{11}-\Pi_{12}\Pi_{22}^{-1} \Pi_{21} \succ 0
\end{equation}
in accordance with (\ref{qhat0}) and the property that $\Pi\succ 0$. Since (\ref{Pdot}) is equivalent to $ (P^{-1})^{^\centerdot} = K D^{-1} K$, then its solution is given by
\begin{equation}
\label{Psol}
  P(t) = (P(0)^{-1} + tK D^{-1} K)^{-1},
  \qquad
  t\>0.
\end{equation}
In view of (\ref{dpe}) and (\ref{dqhat}), the estimation error in (\ref{qq}) satisfies the SDE
$
    \rd e = PKD^{-1}(-Ke \rd t + N^\rT \rd W)
$,
with $e(0)$ being $\cN(0,P(0))$-distributed with the covariance matrix (\ref{P0}) and independent of the initial momentum $p(0)$ and the Wiener process $W$. The asymptotic behaviour $P(t)\sim \frac{1}{t} K^{-1} D K^{-1}$ of the covariance matrix (\ref{Psol}), as $t \to +\infty$, reflects the accumulation  of information in the momentum process $p$ about the initial position $q(0)$ of the LSH system.

\section{\bf Robust stochastic stability
 }\label{sec:diss}

We will now consider dissipation relations for the quadratic function $\Ups$ of the system variables in (\ref{Ups}) in the case when $W$ is a statistically uncertain random force in the form of an Ito process (\ref{dW}) instead of the standard Wiener process. Assuming that the parameter  $\eps$ satisfies the conditions (\ref{eps1}) and (\ref{eps2}) of the proof of Theorem~\ref{th:stab}, a combination of (\ref{dH}), (\ref{dqp}), (\ref{2T}) leads to
\begin{align}
\nonumber
    \rd \Ups
    = &
    \rd H + \eps \rd (q^\rT p)\\
\nonumber
        = &
        \Big(
            -
            \|\dot{q}\|_F^2
    +
            \frac{1}{2}
            \bra
                N M^{-1}N^\rT,
                \Sigma
            \ket
        \Big)\rd t
        +
        \dot{q}^\rT
        N^\rT \rd W\\
\nonumber
        & +
    \eps((\dot{q}^\rT p + q^\rT f)\rd t
 +
    y^\rT
    \rd W)\\
\nonumber
    =&
    \frac{1}{2}
    (-\|x\|_{\Psi}^2 +             \bra
                \wt{N} \wt{N}^\rT,
                \Sigma
            \ket)\rd t
            +(\eps q + M^{-1}p)^\rT N^\rT \rd W\\
\label{dUps}
    =&
    \frac{1}{2}
    (-\|x\|_{\Psi}^2 +
    \bra
                \wt{N} \wt{N}^\rT,
                \Sigma
            \ket
            +
            2x^\rT \Gamma \alpha
            )\rd t
            +x^\rT
            \Gamma \beta \rd \omega.
\end{align}
Here, $\alpha$ and $\Sigma$ are the drift vector and the diffusion matrix of $W$ from (\ref{dW}) and (\ref{diffmat}). Also, use is made of the matrices $\wt{N}$ and  $\Psi$ from (\ref{Nnew}) and (\ref{QAAQ}) together with an auxiliary matrix $\Gamma\in \mR^{2n\x m}$ given by
\begin{equation}
\label{Gamma}
  \Gamma
  :=
            \begin{bmatrix}
                \eps I_n\\
                M^{-1}
            \end{bmatrix}
            N^\rT.
\end{equation}
Now, if both $\|\Sigma\|$ and $|\alpha|^2$ are almost surely bounded from above by quadratic functions of the system variables (with constant coefficients), then, in view of the Cauchy-Bunyakovsky-Schwarz inequality,
\begin{equation}
\label{class}
    \bra
                \wt{N} \wt{N}^\rT,
                \Sigma
            \ket
            +
            2x^\rT \Gamma \alpha
            \< \gamma + \|x\|_{\Delta}^2
\end{equation}
holds for a scalar $\gamma\>0$ and a real positive semi-definite symmetric matrix $\Delta$ of order $2n$. The inequality (\ref{class}) can be based on prior information about the Ito process $W$ 
and, in fact, describes a particular class of statistical uncertainties in the external random force.

\begin{thm}
\label{th:diss} Suppose the LSH system (\ref{dXlin})--(\ref{C}) satisfies the conditions of Theorem~\ref{th:stab}, and the parameter $\eps$ satisfies (\ref{eps1}) and (\ref{eps2}). Also, suppose the initial system variables have finite second moments (that is, $\bE(|x(0)|^2) <+\infty$), and the uncertainty class for the random external force (\ref{dW}) is described by (\ref{Gamma}) and (\ref{class}) with a sufficiently ``small'' matrix $\Delta$ compared to the matrix $\Psi$ in (\ref{QAAQ}):
\begin{equation}
\label{DeltaPsi}
  \Delta \prec \Psi.
\end{equation}
Then the subsequent second moments of the system variables satisfy
\begin{equation}
\label{diss}
    \limsup_{t\to +\infty}
    \bE(|x(t)|^2)
    \< \frac{\gamma}{\lambda_{\min}(Q)\mu},
\end{equation}
where
\begin{equation}
\label{mu}
    \mu:= \lambda_{\min}((\Psi-\Delta)Q^{-1}),
\end{equation}
and the matrix $Q$ is given by (\ref{Q}).\hfill$\square$
\end{thm}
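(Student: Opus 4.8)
The plan is to use the deformed Hamiltonian $\Ups$ from (\ref{Ups}) as a stochastic Lyapunov function and to reduce the claim to a scalar differential inequality for $\bE\Ups(t)$. First I would start from the stochastic differential (\ref{dUps}) of $\Ups$ under the statistically uncertain force (\ref{dW}) and substitute the uncertainty bound (\ref{class}), which gives
\[
    \rd\Ups
    \<
    \tfrac{1}{2}\big(\gamma - \|x\|_{\Psi-\Delta}^2\big)\rd t
    +
    x^\rT \Gamma\beta\,\rd\omega .
\]
Here $\Psi - \Delta \succ 0$ because $\Psi$ in (\ref{QAAQ}) is positive definite under the conditions (\ref{eps1}), (\ref{eps2}) on $\eps$ (as established in the proof of Theorem~\ref{th:stab}), together with the hypothesis $\Delta \prec \Psi$ in (\ref{DeltaPsi}).

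Next I would relate the quadratic form $\|x\|_{\Psi-\Delta}^2$ to $\Ups$ itself. Since $Q\succ 0$ (again by (\ref{eps1})) and $\Psi-\Delta\succ 0$, the matrix $(\Psi-\Delta)Q^{-1}$ is similar to the symmetric positive definite matrix $Q^{-1/2}(\Psi-\Delta)Q^{-1/2}$, hence has a positive real spectrum, so that $x^\rT(\Psi-\Delta)x \> \mu\, x^\rT Q x = 2\mu\Ups$ with $\mu$ from (\ref{mu}). Substituting this bound yields the closed inequality
\[
    \rd\Ups
    \<
    \big(\tfrac{\gamma}{2} - \mu\Ups\big)\rd t
    +
    x^\rT\Gamma\beta\,\rd\omega .
\]

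I would then take expectations. Using the finite second moment $\bE(|x(0)|^2)<+\infty$ (which gives $\bE\Ups(0)<+\infty$) together with the quadratic growth of $\|\Sigma\|$ and $|\alpha|^2$ underlying (\ref{class}), a localization argument with the stopping times $\tau_N := \inf\{t\>0:\ |x(t)|\> N\}$, followed by letting $N\to+\infty$ and applying Fatou's lemma, shows that $\bE\Ups(t)$ is finite for all $t$ and that the Ito integral makes no contribution in the mean, so that $\tfrac{\rd}{\rd t}\bE\Ups(t) \< \tfrac{\gamma}{2} - \mu\bE\Ups(t)$. By Gr\"onwall's inequality, $\bE\Ups(t) \< \re^{-\mu t}\bE\Ups(0) + \tfrac{\gamma}{2\mu}(1-\re^{-\mu t})$, whence $\limsup_{t\to+\infty}\bE\Ups(t) \< \tfrac{\gamma}{2\mu}$. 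Finally, the bound $\Ups = \tfrac{1}{2}\|x\|_Q^2 \> \tfrac{1}{2}\lambda_{\min}(Q)|x|^2$ gives $\bE(|x(t)|^2) \< \tfrac{2}{\lambda_{\min}(Q)}\bE\Ups(t)$, and combining the last two estimates produces (\ref{diss}).

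The step I expect to be the main obstacle is the integrability and martingale bookkeeping: establishing a priori that $\bE(|x(t)|^2)$ does not blow up in finite time and that $\int_0^t x^\rT\Gamma\beta\,\rd\omega$ (or its stopped version) has vanishing expectation. This is precisely where the quadratic-growth hypotheses on the drift and diffusion parts of the uncertain force and the finiteness of the second moment of $x(0)$ are genuinely used; the remainder is the deterministic Lyapunov/Gr\"onwall mechanism.
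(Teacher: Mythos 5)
Your proposal is correct and follows essentially the same route as the paper: both rest on the key matrix inequality $\Psi-\Delta\succcurlyeq\mu Q$ (equivalently $x^\rT(\Psi-\Delta)x\>2\mu\Ups$) applied to the drift of $\rd\Ups$ after inserting the uncertainty bound (\ref{class}), followed by the estimate $\bE\Ups\>\frac{1}{2}\lambda_{\min}(Q)\bE(|x|^2)$. The only difference is presentational -- the paper packages your Gr\"onwall/integrating-factor step as the supermartingale property of $Z(t)=\re^{\mu t}\big(\Ups(t)-\frac{\gamma}{2\mu}\big)$, and your explicit localization argument for the stochastic integral is a point the paper passes over silently.
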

\begin{proof}
The fulfillment of (\ref{eps1}) and (\ref{eps2}) ensures that $Q\succ 0$ and $\Psi\succ 0$, while (\ref{DeltaPsi}) yields $\mu>0$. Now, the random process
\begin{equation}
\label{Z}
    Z(t):= \re^{\mu t} \Big(\Ups(t) - \frac{\gamma}{2\mu}\Big),
    \qquad
    t\>0,
\end{equation}
is a supermartingale \cite{RW_2000} with respect to the filtration $\cF$, which, in view of (\ref{dUps})--(\ref{mu}), satisfies an SDE with a non-positive drift:
\begin{align}
\nonumber
    \rd Z
    &=
    \mu Z\rd t + \re^{\mu t} \rd \Ups\\
\nonumber
    &=
    \re^{\mu t}
    \Big(\Big(\mu \Ups  -\frac{\gamma}{2}\Big)\rd t+ \rd \Ups \Big)\\
\nonumber
    &=
    \frac{1}{2}
    \big(-\|x\|_{\Psi-\mu Q}^2 +
    \bra
                \wt{N} \wt{N}^\rT,
                \Sigma
            \ket
            +
            2x^\rT \Gamma \alpha
            -\gamma
            \big)\rd t
            +x^\rT
            \Gamma \beta \rd \omega.
\end{align}
Hence, $\bE Z(t)$ is a nonincreasing function of time $t\>0$, which, in combination with (\ref{Z}), implies that
\begin{equation}
\label{EUps}
    \bE \Ups(t)
    \<
    \frac{\gamma}{2\mu}
    +
    \re^{-\mu t}
    \Big(
        \bE \Ups(0)
        -
        \frac{\gamma}{2\mu}
    \Big),
\end{equation}
where $\bE \Ups(0) \< \frac{1}{2} \lambda_{\max}(Q) \bE(|x(0)|^2)<+\infty$ due to (\ref{Ups}) and the assumption of the theorem.  The upper bound (\ref{diss}) can now be obtained from (\ref{EUps}) and the inequality $\bE \Ups \> \frac{1}{2} \lambda_{\min}(Q) \bE(|x|^2)$.
\end{proof}

\section{\bf Feedback connection of LSH systems}\label{sec:loop}

Consider a feedback connection of two LSH systems $S_k:= (K_k,M_k,F_k,N_k)$, $k=1,2$ (interpreted as a plant and a controller, respectively), which have $n$ degrees of freedom and are driven by 
$\mR^m$-valued Ito processes $W_1$ and $W_2$ (with respect to a common filtration $\cF$ such that the initial system  states are $\cF_0$-measurable), see  Fig.~\ref{fig:loop}.
\begin{figure}[htpb]
\centering
\unitlength=1mm
\linethickness{0.2pt}
\begin{picture}(50.00,43.00)
    \put(20,30){\framebox(10,10)[cc]{{$S_1$}}}
    \put(20,10){\framebox(10,10)[cc]{{$S_2$}}}
    \put(40,15){\vector(0,1){17}}
    \put(30,15){\line(1,0){10}}

    \put(20,35){\line(-1,0){10}}
    \put(10,35){\vector(0,-1){17}}
    \put(10,15){\circle{6}}
    \put(10,15){\makebox(0,0)[cc]{$+$}}
    \put(13,15){\vector(1,0){7}}
    \put(-3,15){\vector(1,0){10}}
    \put(-8,15){\makebox(0,0)[cc]{{$W_2$}}}
\put(42,15){\makebox(0,0)[lc]{{$y_2$}}}

    \put(37,35){\vector(-1,0){7}}
    \put(40,35){\circle{6}}
    \put(40,35){\makebox(0,0)[cc]{$+$}}
    \put(53,35){\vector(-1,0){10}}
    \put(55,35){\makebox(0,0)[lc]{{$W_1$}}}
    \put(8,35){\makebox(0,0)[rc]{{$y_1$}}}
\end{picture}\vskip-8mm
\caption{The feedback connection of two LSH systems described by (\ref{qdotlink})--(\ref{ylink}).
}
\label{fig:loop}
\end{figure}
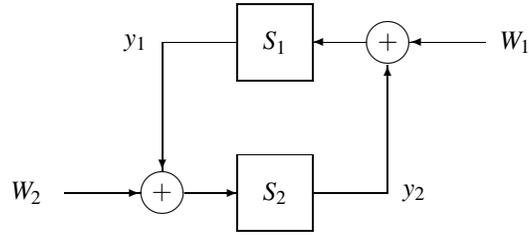
Similarly to the deterministic counterpart \cite[Section II]{V_2016} of the stochastic setting,  this system connection is governed by 
\begin{align}
\label{qdotlink}
    \dot{q}_k
    & =
    M_k^{-1} p_k,\\
\nonumber
    \rd p_k
    & =
    -(K_kq_k + F_kM_k^{-1}p_k)\rd t
        +
          N_k^\rT
        (y_{3-k}\rd t + \rd W_k)\\
\label{dplink}
    & =
    -(K_kq_k-N_k^\rT N_{3-k}q_{3-k} + F_kM_k^{-1}p_k)\rd t
        +
          N_k^\rT
        \rd W_k,\\
\label{ylink}
    y_k &
    =
      N_kq_k
\end{align}
for $k=1,2$. Here, with a slight abuse of notation, $q_1$, $q_2$ and $p_1$, $p_2$ are the corresponding $n$-dimensional position and momentum vectors 
which are assembled into a $4n$-dimensional state vector $x$ as
\begin{equation}
\label{xqp}
    x
    :=
    {\begin{bmatrix}
        q\\
        p
    \end{bmatrix}},
    \qquad
    q
    :=
    {\begin{bmatrix}
        q_1\\
        q_2
    \end{bmatrix}},
    \qquad
    p
    :=
    {\begin{bmatrix}
        p_1\\
        p_2
    \end{bmatrix}}.
\end{equation}
The closed-loop system is also an LSH system whose position and momentum vectors $q$ and $p$ are
driven by the $2m$-dimensional random process
\begin{equation}
\label{WW}
    W
    :=
    {\begin{bmatrix}
        W_1\\
        W_2
    \end{bmatrix}}.
\end{equation}
The quadruple $(K,M,F,N)$ of this system is computed as
\begin{align}
\label{KM}
    K
    & =
    {\begin{bmatrix}
        K_1 & -N_1^\rT N_2\\
        -N_2^\rT N_1 & K_2
    \end{bmatrix}},
    \qquad
    M
    =
    {\begin{bmatrix}
        M_1 & 0\\
        0 & M_2
    \end{bmatrix}},\\
\label{FN}
    F
    & =
    {\begin{bmatrix}
        F_1 & 0\\
        0 & F_2
    \end{bmatrix}},
    \qquad
    \qquad
    \qquad\
    N
    =
    {\begin{bmatrix}
        N_1 & 0\\
        0 & N_2
    \end{bmatrix}},
\end{align}
cf. \cite[Eq. (29)]{V_2016}. The mass and damping matrices $M$ and $F$  inherit positive definiteness from the corresponding matrices of the subsystems $S_1$ and $S_2$.  However, for given $K_1\succ 0$ and $K_2 \succ 0$, the stiffness matrix $K$ in (\ref{KM}) is positive definite if and only if $K_1-N_1^\rT N_2 K_2^{-1}N_2^\rT N_1\succ 0$, which is equivalent to  the coupling matrices in (\ref{FN}) being small enough in the sense that
\begin{equation}
\label{KNNK}
    \|K_1^{-1/2} N_1^\rT N_2 K_2^{-1/2}\| <1,
\end{equation}
where $\|\cdot\|$ is the operator norm of a matrix. In accordance with \cite{LP_2008}, the condition (\ref{KNNK}) can be expressed in terms of the static gains (\ref{T0}) of the systems $S_1$ and $S_2$ due to the relation
$
    \|K_1^{-1/2} N_1^\rT N_2 K_2^{-1/2}\|^2
     =
    \lambda_{\max}
    (\Phi_1(0) \Phi_2(0))
$,
where $\lambda_{\max}(\cdot)$ is the largest eigenvalue (of a matrix with a real spectrum).
Since the operator norm is submultiplicative, a sufficient condition for (\ref{KNNK}) (in the spirit of the small-gain theorem)  is
$
    \|\Phi_1(0)\|\|\Phi_2(0)\| <1
$. In combination with (\ref{KNNK}), the robust stability estimates of Section~\ref{sec:diss} can be applied to a suboptimal choice
of an LSH controller $S_2$ so as to achieve guaranteed upper bounds on the second-order moments of the closed-loop system variables (\ref{xqp}) in the presence of statistical uncertainty in the random process (\ref{WW}).
\section{\bf Conclusion}\label{sec:conc}

We have briefly discussed some of the dynamic and probabilistic aspects of stochastic Hamiltonian systems driven by random forces. In particular, we have considered stability conditions,  energy balance relations, the structure  of the invariant measure and stochastic robust stability  for LSH systems in the presence of statistical uncertainty. These results are applicable to robust control of LSH systems, which is
 intended for future publications.

%

%

\end{document}